\documentclass[a4paper,11pt,twoside]{article}
\usepackage{CJK}
\usepackage{latexsym,bm}
\usepackage[tbtags]{amsmath}
\usepackage{amssymb,amsmath}

\usepackage{fancyhdr}
\usepackage{cite}
\usepackage{tabularx}
\usepackage{booktabs}
\usepackage{dcolumn}

\usepackage{multirow}

\usepackage{graphicx}

\usepackage{wrapfig}
\usepackage{multicol}
\usepackage{verbatim}
\usepackage{enumerate,color}
\pagestyle{fancy}
\fancyhead{} 

\def\firstpage{1}                           

\setcounter{page}{\firstpage}
\usepackage{Shangda0616}                    

\newcommand{\supercite}[1]{\!\!\textsuperscript{\cite{#1}}} 
\newtheorem{lemma}{\indent Lemma}

\begin{document} 


\title{{\large  \textbf{Isotropic polynomial invariants of the Hall tensor}}
\thanks{\footnotesize{Received (date) / Revised (date)
\newline
Project supported by HKBU RC's Start-up Grant for New Academics, the Hong Kong Research Grant Council (Grant Nos. PolyU 15302114, 15300715, 15301716 and 15300717), and the National Natural Science Foundation of China (Grant No. 11372124)
\newline
Corresponding author, Liqun QI, Professor,
E-mail:\;maqilq@polyu.edu.hk}}}
\author{\small{Jinjie LIU$^1$,\quad Weiyang DING$^{2}$,\quad  Liqun QI$^1$,\quad Wennan ZOU$^3$} 
\\[2mm]
\footnotesize {1. Department of Applied Mathematics, The Hong Kong Polytechnic University, Hong Kong, S. A. R. China;} 
\\
\footnotesize {2. Department of Mathematics, Hong Kong Baptist University, Hong Kong, S. A. R. China;} 
\\
\footnotesize {3. Institute for Advanced Study, Nanchang University, Nanchang 330031, P. R. China}
 } 

\maketitle 
\thispagestyle{first}
\footnotesize
\begin{abstract}
\noindent \textbf{Abstract~~~}
  The Hall tensor emerges from the study of the Hall effect, an important magnetic effect observed in electric conductors and semiconductors. The Hall tensor is third order and three dimensional, whose first two indices are skew-symmetric. In this paper, we investigate the isotropic polynomial invariants of the Hall tensor by connecting it with a second order tensor via the third order Levi-Civita tensor.
  We propose a minimal isotropic integrity basis with 10 invariants for the Hall tensor. Furthermore, we prove that this minimal integrity basis is also an irreducible isotropic function basis of the Hall tensor.
\\[2mm]
\textbf{Key words~~~}isotropic polynomial invariants, irreducibility, function basis, integrity basis, Hall tensor
\\[2mm]
\textbf{Chinese Library Classification~~~}O29
\\
\textbf{2010 Mathematics Subject Classification~~~}15A69, 15A72, 15A90
\end{abstract}

\section*{\small Nomenclature}
\footnotesize

\setlength{\columnsep}{5mm}
\begin{multicols}{2}
\begin{zk}{$a$\quad}
\item[$\bf I$,] second order identify tensor;
\item[$\bm \varepsilon$,] permutation tensor(i.e., Levi-Civita tensor) with components $\varepsilon_{ijk}$ in three dimensions;
\item[$\bf Q$,] orthogonal tensor with components $q_{ij}$;
\item[$\langle {\bf Q}\rangle {\bf A}$,]\ a second order tensor ${\bf A}$ under an orthogonal transformation;
\item[$\mathcal K$,] a Hall tensor with components $k_{ijk}$;
\item[${\bf e}_i \otimes {\bf e}_j \otimes {\bf e}_k $,]\ orthonormal base in three dimensions;
\item[${\rm tr}\, {\bf A}$, ${\bf A}^{\top}$,]\ trace and transpose, respectively, of a second order tensor $\bf A$;
\item[${\rm det}\, {\bf A}$,]\ determinant of a second order tensor $\bf A$;
\item[${\bm \varepsilon}{\mathcal K}$,]\  second order tensor with components $\varepsilon_{kli} k_{klj}$ in three dimensions;
\item[${\bm \varepsilon}{\bf A}$,]\  third order tensor with components $\varepsilon_{ijl} a_{lk}$ in three dimensions;
\item[$\mathbb{R}^n$,]\  the real number field with dimension $n$;
\item[$\mathbb{R}^{m\times n}$,]\  $m-$by$-n$ matrix on the real number field.
\end{zk}
\end{multicols}

\section{Introduction}
\small
Tensor function representation theory is an essential topic in continuum mechanics, which focuses on the tensor invariants under coordinate transformations. Since tensor invariants often reveal more intrinsic information of materials than tensor components, the complete and irreducible representation for invariant tensor functions plays a key role in modeling nonlinear constitutive equations in both theoretical and applied physics. Such representation prescribes the number and the type of scalar variables required in the constitutive equations. These representations are efficient in the process of describing the physical behavior of anisotropic materials, because the invariant conditions are dominant and no other simple methods are able to determine such information. There are plenty of research works on this topic from the last century\supercite{Boehler1977, BKO94, Pennisi87, Smith71, SmithBao1997, Spencer1970, Wang70, Zheng93, Zheng94, Zheng96, ZhengBetten95}. For instance, the minimal integrity basis and irreducible function basis of a second order tensor in three dimensions were well studied by Wang\supercite{Wang70}, Smith\supercite{Smith71}, Boehler\supercite{Boehler1977}, Pennisi and Trovato\supercite{Pennisi87}, and Zheng\supercite{Zheng93}. Furthermore, Boehler et al.\supercite{BKO94} investigate polynomial invariants for the elasticity tensor in 1993. Some very recent works are devoted to minimal integrity bases and irreducible function bases for third order and fourth order tensors\supercite{ChenQiZou17, CLQZZ18}.

The tensor function representation theory is also closely related to the classical invariant theory in algebraic geometry\supercite{AuffrayRopars16, Hilbert93, OliveAuffray14, Sturmfels}. One of the most famous approaches for computing the complete invariant basis was first introduced by Hilbert\supercite{Hilbert93}. In 2017, Olive, Kolev, and Auffray\supercite{OliveKolevAuffray17} studied the minimal integrity basis with $297$ invariants of the fourth order elasticity tensors successfully via the approaches from the algebraic geometry.

The Hall effect is an important magnetic effect observed in electric conductors and semiconductors \supercite{Haussuhl08}.  It was discovered in 1879 by and named after Edwin Hall \supercite{Hall1879}.
When an electric current density ${\bf J}$ is flowing through a plate and the plate is simultaneously immersed in a magnetic field ${\bf H}$ with a component transverse to the current, the electric field strength ${\bf E}$ is proportional to current density and magnetic field strength
\begin{align*}
E_i = k_{ijk} J_j H_k,
\end{align*}
where the third order tensor ${\mathcal K}$ with components $k_{ijk}$ is called the Hall tensor \supercite{Haussuhl08}.
We note that the representation of the Hall tensor under any orthonormal basis satisfies $k_{ijk} = -k_{jik}$ for all $i,j,k=1,2,3$, since the Onsager relation for transport processes with time reversal is valid.
The Hall tensor is essential for describing the electromagnetic induction. Therefore, it is significant to investigate the minimal integrity basis and irreducible function basis of the Hall tensor.

This paper is devoted to the invariants of the Hall tensor and organized as follows.
We first briefly review some basic definitions and the relationship between the integrity basis and the function basis of a tensor in Section 2. Then we build a connection between the invariants of a Hall tensor and that of  a second order tensor, which is important for the subsequent contents. Furthermore, we propose a minimal isotropic integrity basis with 10 isotropic invariants of the Hall tensor in Section 3.  In Section 4, we proved that the  minimal integrity basis with 10 invariants of the Hall tensor is also its irreducible function basis.
Finally, we draw some concluding remarks and raise one further question in Section 5.


\section{Preliminaries}

Denote $\mathcal{A}$ as an $m$th order tensor represented by $a_{i_1i_2\cdots i_m}$ under some orthonormal coordinate.
A scalar-valued tensor function $f(\mathcal{A})$ is called an isotropic invariant of ${\mathcal A}$ if it is invariant under any orthogonal transformations, including rotations and reflections, i.e.,
$$f(\langle {\bf Q} \rangle \mathcal{A})=f(\mathcal{A}),$$
or equivalently expressed by
$$f(q_{i_1 j_1} q_{i_2 j_2} \dots q_{i_m j_m}a_{j_1 j_2 \dots j_m} ) = f(a_{i_1 i_2 \dots i_m}),$$
where $\bf Q$ is an orthogonal tensor (${\bf Q}^{\top}{\bf Q}={\bf Q}{\bf Q}^{\top}={\bf I}$) with components $q_{ij}$. If $f(\mathcal{A})$ is only invariant under rotations, i.e., $f(\langle {\bf Q} \rangle \mathcal{A})=f(\mathcal{A})$
for any orthogonal tensor $\bf Q$ with ${\rm det}\, {\bf Q}=1$, then it is called a hemitropic invariant of tensor $\mathcal{A}$.
Furthermore, if $f(\mathcal{A})$ is a polynomial, then it is called a polynomial invariant of ${\mathcal A}$.  In the subsequence, invariants always stand for polynomial invariants unless specific remarks are made there.

For any second order tensor, the hemitropic invariants and the isotropic invariants are equivalent, since it keeps unaltered under the central inversion $-{\bf I}$ \supercite{Zheng94}. Nevertheless, any isotropic polynomial invariant of a third order tensor has to be the summation of several even order degree polynomials.

We then briefly review the definitions and properties of (minimal) integrity bases and (irreducible) function bases of a tensor.

\begin{Dingli}[Definition 1 (integrity basis).]\label{def1}
Let $\Psi = \{\psi_1,\psi_2,\dots,\psi_r\}$ be a set of  isotropic (or hemitropic, respectively) invariants of a tensor $\mathcal{A}$.
\begin{enumerate}[{\rm (1)}]
  \item $\Psi$ is said to be {\bf polynomial irreducible} if none of $\psi_1, \psi_2, \dots, \psi_r$ can be expressed by a polynomial of the remainders;
  \item $\Psi$ is called an  isotropic (or hemitropic, respectively)  {\bf integrity basis} if any isotropic (or hemitropic, respectively) invariant of $\mathcal{A}$ is expressible by a polynomial of $\psi_1, \psi_2, \dots, \psi_r$;
  \item $\Psi$ is called an isotropic (or hemitropic, respectively) {\bf minimal integrity basis} if it is polynomial irreducible and an isotropic (or hemitropic, respectively) integrity basis.
\end{enumerate}
\end{Dingli}
\begin{Dingli}[Definition 2 (function basis).]\label{def2}
Let $\Psi = \{\psi_1,\psi_2,\dots,\psi_r\}$ be a set of  isotropic (or hemitropic, respectively) invariants of a tensor $\mathcal{A}$.
\begin{enumerate}[{\rm (1)}]
  \item An invariant in $\Psi$ is said to be  {\bf functionally irreducible} if it cannot be expressed by a single-valued function of the remainders,
  $\Psi$ is said to be  functionally irreducible if all of $\psi_1, \psi_2, \dots, \psi_r$ are functionally irreducible;
  \item $\Psi$ is called an isotropic (or hemitropic, respectively)  {\bf function basis} if any isotropic (or hemitropic, respectively) invariant of $\mathcal{A}$ is expressible by a function  of $\psi_1, \psi_2, \dots, \psi_r$;
  \item $\Psi$ is called an isotropic (or hemitropic, respectively) {\bf irreducible function basis} if it is functionally irreducible and is an isotropic (or hemitropic, respectively) function basis.
\end{enumerate}
\end{Dingli}

It is straightforward to verify by definitions that an isotropic (or hemitropic, respectively) integrity basis is an isotropic (or hemitropic, respectively) function basis, but the converse is incorrect in general.  Thus, the number of invariants in an isotropic (or hemitropic, respectively) irreducible function basis is no greater than the number of invariants in an isotropic (or hemitropic, respectively) minimal integrity basis. For instance, the number of irreducible function basis of a third order traceless symmetric tensor is 11 which is less than 13, the number of invariants in its minimal integrity basis\supercite{CLQZZ18}.

Particularly, it has been proved that the number of invariants of each degree in an isotropic (or hemitropic, respectively) minimal integrity basis is always fixed \supercite{OliveKolevAuffray17}.   Nevertheless, it is still unclear whether the number of invariants of an irreducible function basis is fixed.

\section{Minimal integrity basis of the Hall tensor}

Let  ${\mathcal K}$ be a Hall tensor represented by $k_{ijk}$ under an orthonormal basis ${\bf e}_i \otimes {\bf e}_j \otimes {\bf e}_k$.   Define a second order tensor ${\bf A}$ accordingly, with components $a_{ij}$ under this orthonormal basis, by the tensor product operation
$${\bf A}:=\textstyle\frac{1}{2}{\bm \varepsilon}{\mathcal{K}},$$
or equivalently
\begin{align*}
(a_{ij}){\bf e}_i \otimes {\bf e}_j = (\textstyle\frac{1}{2} \varepsilon_{kli} k_{klj}){\bf e}_i \otimes {\bf e}_j ,
\end{align*}
where ${\bm \varepsilon}$ is the third order Levi-Civita tensor.   Conversely, the Hall tensor can also be expressed with this second order tensor by
$$\mathcal{K}:={\bm \varepsilon}{\bf A},$$
or equivalently
\begin{align*}\label{eq_hall}
  (k_{ijk}){\bf e}_i \otimes {\bf e}_j \otimes {\bf e}_k = (\varepsilon_{ijl} a_{lk}){\bf e}_i \otimes {\bf e}_j \otimes {\bf e}_k.
\end{align*}
There are nine independent components in a Hall tensor $\mathcal{K}$ due to the anti-symmetry of the first two indices of the components in a Hall tensor. Without loss of generality, denote the nine independent components of the Hall tensor ${\mathcal K}$ as:
$$k_{121}, k_{122}, k_{123}, k_{131}, k_{132}, k_{133}, k_{231}, k_{232}, k_{233}.$$
Then under a right-handed coordinate, the representation of the associated second order tensor can be written into a matrix form mathematically:
\begin{align*}
\begin{pmatrix}
  k_{231} & k_{232} & k_{233} \\ -k_{131} & -k_{132} & -k_{133} \\ k_{121} & k_{122} & k_{123}
\end{pmatrix} \in \mathbb{R}^{3 \times 3}.
\end{align*}

The following theorem reveals the relationship between the invariants of the Hall tensor and the ones of its associated second order tensor.

\begin{lemma}[Theorem 1.]\label{thm_invariant}
Let $\mathcal{K}$ be a Hall tensor with components $k_{ijk}$. We use ${\bf A}(\mathcal{K})$ to denote the associated second order tensor of ${\cal K}$.
\begin{enumerate}[{\rm (1)}]
  \item Any isotropic invariant of $\mathcal{K}$ is an isotropic invariant of ${\bf A}(\mathcal{K})$;
  \item Any isotropic invariant of ${\bf A}(\mathcal{K})$ with even degree is an isotropic invariant of $\mathcal{K}$, and any isotropic invariant of ${\bf A}(\mathcal{K})$ with odd degree is an hemitropic invariant of $\mathcal{K}$
\end{enumerate}
\end{lemma}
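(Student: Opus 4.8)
The plan is to reduce both parts to a single transformation law describing how the associated tensor $\mathbf A$ responds when the underlying Hall tensor $\mathcal K$ is rotated or reflected. The essential structural input is that the Levi-Civita tensor is an isotropic \emph{pseudo}-tensor: for any orthogonal $\mathbf Q$ one has $q_{ip}q_{jq}q_{kr}\varepsilon_{pqr}=(\det{\bf Q})\,\varepsilon_{ijk}$. First I would record the two defining contractions $a_{ij}=\tfrac12\varepsilon_{kli}k_{klj}$ and $k_{ijk}=\varepsilon_{ijl}a_{lk}$ and check, using $\varepsilon_{mni}\varepsilon_{mnl}=2\delta_{il}$ and $\varepsilon_{ijl}\varepsilon_{mnl}=\delta_{im}\delta_{jn}-\delta_{in}\delta_{jm}$ together with the skew-symmetry $k_{ijk}=-k_{jik}$, that $\bm\varepsilon$ and $\mathbf A(\cdot)$ are mutually inverse. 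This makes $\mathcal K\mapsto\mathbf A(\mathcal K)$ a linear bijection between the nine-dimensional space of Hall tensors and $\mathbb R^{3\times3}$, and in particular a degree-preserving correspondence between their polynomial invariants, which is what lets the even/odd distinction in part~(2) make sense.

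The core computation is the identity
\begin{align*}
{\bf A}(\langle{\bf Q}\rangle\mathcal K)=(\det{\bf Q})\,\langle{\bf Q}\rangle{\bf A}(\mathcal K).
\end{align*}
Starting from $\tilde a_{ij}=\tfrac12\varepsilon_{kli}\,q_{kp}q_{lq}q_{jr}\,k_{pqr}$, I would multiply the pseudo-tensor identity by $q_{sr}$ and use orthogonality $q_{ir}q_{sr}=\delta_{is}$ to rewrite $\varepsilon_{kli}q_{kp}q_{lq}=(\det{\bf Q})\,\varepsilon_{pqm}q_{im}$, and then collapse $\varepsilon_{pqm}k_{pqr}=2a_{mr}$ to reach $\tilde a_{ij}=(\det{\bf Q})\,q_{im}q_{jr}a_{mr}$. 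This one identity is the whole engine of the theorem: for a rotation the factor $\det{\bf Q}$ disappears and $\mathbf A$ transforms as an ordinary second order tensor, while for a reflection it contributes a sign. I expect this contraction to be the main obstacle, since it is easy to misplace the factor $\det{\bf Q}$ or confuse $\mathbf Q$ with its transpose; disciplined bookkeeping of the summation indices is exactly what the argument hinges on.

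For part~(1), given an isotropic invariant $f$ of $\mathcal K$, I set $g({\bf A}):=f(\bm\varepsilon{\bf A})$ and compute, via the transformation law and $\bm\varepsilon\circ\mathbf A(\cdot)=\mathrm{id}$, that $g(\langle{\bf Q}\rangle{\bf A})=f\big((\det{\bf Q})\langle{\bf Q}\rangle\mathcal K\big)$. For rotations this equals $f(\langle{\bf Q}\rangle\mathcal K)=f(\mathcal K)$. For reflections the sign is absorbed using the central inversion $\langle-{\bf I}\rangle\mathcal K=-\mathcal K$: since $-{\bf I}$ is orthogonal, isotropy of $f$ gives $f(-\mathcal X)=f(\mathcal X)$, so $f(-\langle{\bf Q}\rangle\mathcal K)=f(\langle{\bf Q}\rangle\mathcal K)=f(\mathcal K)$. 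Hence $g$ is an isotropic invariant of $\mathbf A$.

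For part~(2), given an isotropic invariant $g$ of $\mathbf A$, I set $f(\mathcal K):=g(\mathbf A(\mathcal K))$ and compute $f(\langle{\bf Q}\rangle\mathcal K)=g\big((\det{\bf Q})\langle{\bf Q}\rangle{\bf A}\big)$. Rotations give $f(\langle{\bf Q}\rangle\mathcal K)=f(\mathcal K)$ unconditionally, so $f$ is always at least hemitropic. Taking $g$ homogeneous (its homogeneous components are separately invariant), a reflection yields $f(\langle{\bf Q}\rangle\mathcal K)=g(-\langle{\bf Q}\rangle{\bf A})$: if $\deg g$ is even then $g(-{\bf X})=g({\bf X})$ and $f$ is isotropic, whereas if $\deg g$ is odd then $g(-{\bf X})=-g({\bf X})$ forces $f(\langle{\bf Q}\rangle\mathcal K)=-f(\mathcal K)$, so $f$ is only hemitropic. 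Because the bijection preserves degrees, these parities are exactly the even/odd degrees asserted in the statement, which completes the proof.
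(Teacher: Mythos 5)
Your proposal is correct and takes essentially the same route as the paper: both arguments hinge on the pseudo-tensor property of the Levi-Civita tensor (your transformation law ${\bf A}(\langle{\bf Q}\rangle\mathcal K)=(\det{\bf Q})\,\langle{\bf Q}\rangle{\bf A}(\mathcal K)$ is exactly the paper's step $\langle{\bf Q}\rangle\bm\varepsilon=(\det{\bf Q})\,\bm\varepsilon$), then use the evenness of isotropic invariants of third order tensors for part (1) and the even/odd degree parity under $\det{\bf Q}=-1$ for part (2). Your explicit index verification that $\bm\varepsilon$ and ${\bf A}(\cdot)$ are mutually inverse, and your central-inversion justification of evenness, merely spell out details the paper asserts as known.
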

\begin{proof}
(1) An isotropic invariant $f(\mathcal{K})$ of the Hall tensor $\mathcal{K}$ is also a polynomial function of its associated second order tensor ${\bf A}(\mathcal{K})$, denoted by $g({\bf A}) := f({\bm \varepsilon}{\bf A})$. Now we need to show that $g({\bf A})$ is an isotropic invariant of  ${\bf A}(\mathcal{K})$. Let ${\bf Q}$ be any orthogonal tensor. According to the definition of  isotropic invariants, we have
$$
  f(\langle{\bf Q}\rangle\mathcal{K}) = f(\mathcal{K}) = f({\bm \varepsilon}{\bf A})=  g({\bf A}).
$$
Make use of the equality $\langle{\bf Q}\rangle{\bm \varepsilon} = ({\rm det}\, {\bf Q}){\bm \varepsilon}$, then
$$
  f(\langle{\bf Q}\rangle\mathcal{K}) =  f(\langle{\bf Q}\rangle({\bm \varepsilon}{\bf A})) =  f(\langle{\bf Q}\rangle{\bm \varepsilon}\langle{\bf Q}\rangle{\bf A})= f(({\rm det}\, {\bf Q}){\bm \varepsilon}\langle{\bf Q}\rangle{\bf A})=g(({\rm det}\, {\bf Q})\langle{\bf Q}\rangle{\bf A}).
$$
Since an isotropic invariant of a third order tensor must be an even function, thus
$$g(({\rm det}\, {\bf Q})\langle{\bf Q}\rangle{\bf A})=g(\langle{\bf Q}\rangle{\bf A}).$$
Hence, $g(\langle{\bf Q}\rangle{\bf A})=  g({\bf A})$, i.e., $g({\bf A})$ is an isotropic invariant of  ${\bf A}$.

(2) Denote an invariant of ${\bf A}(\mathcal{K})$ as $g({\bf A})$. It is also a polynomial of the Hall tensor $\mathcal{K}$, denoted by $f({\cal K}) := g(\frac{1}{2}{\bm \varepsilon}{\cal K})$. For any orthogonal tensor $\bf Q$, since $g({\bf A})$ is an invariant, we know
$$
  g(\langle{\bf Q}\rangle{\bf A}) = g({\bf A}) = g({\bm \varepsilon}\mathcal{K})=  f(\mathcal{K}).
$$
Recall that $\langle{\bf Q}\rangle{\bm \varepsilon} = ({\rm det}\, {\bf Q}){\bm \varepsilon}$. Then
$$
  g(\langle{\bf Q}\rangle{\bf A}) = g(\langle{\bf Q}\rangle({\bm \varepsilon}\mathcal{K})) =  g(\langle{\bf Q}\rangle{\bm \varepsilon}\langle{\bf Q}\rangle\mathcal{K})
  = g(({\rm det}\, {\bf Q}){\bm \varepsilon}\langle{\bf Q}\rangle\mathcal{K} = f(({\rm det}\, {\bf Q})\langle{\bf Q}\rangle\mathcal{K}).
$$
Hence, when $g({\bf A})$ is an invariant of even degree, we have $f(\langle{\bf Q}\rangle\mathcal{K})=f(\mathcal{K})$ for any orthogonal tensor $\bf Q$. That is, $f(\mathcal{K})$ is an isotropic invariant of the Hall tensor $\mathcal{K}$.

When $g({\bf A})$ is an invariant of odd degree, only for orthogonal tensor $\bf Q$ satisfying ${\rm det}\, {\bf Q} =1$, it holds that $f(\langle{\bf Q}\rangle\mathcal{K})=f(\mathcal{K})$, which means that $f(\mathcal{K})$ is an hemitropic invariant of the Hall tensor $\mathcal{K}$. The proof is completed.
\end{proof}

Hence, we can construct an integrity basis for a Hall tensor from the integrity basis of its associated second order tensor. For the associated second order tensor ${\bf A}(\mathcal{K})$, we split it into ${\bf A}(\mathcal{K}) = {\bf T} + {\bf W}$, where ${\bf T}$ is symmetric with components $t_{ij} = \frac{1}{2}(a_{ij} + a_{ji})$ and ${\bf W}$ is skew-symmetric with components $w_{ij} = \frac{1}{2}(a_{ij} - a_{ji})$.
It is well known that 7 invariants ${\rm tr}\, {\bf T}$, ${\rm tr}\, {\bf T}^2$, ${\rm tr}\, {\bf T}^3$, ${\rm tr}\, {\bf W}^2$, ${\rm tr}\, {\bf T} {\bf W}^2$, ${\rm tr}\, {\bf T}^2 {\bf W}^2$, ${\rm tr}\, {\bf T}^2 {\bf W}^2 {\bf T} {\bf W}$ form a minimal integrity basis of ${\bf A}(\mathcal{K})$ and also an irreducible function basis as well. We denote the invariants of ${\bf A}(\mathcal{K})$ as follows:
\begin{equation*}
\begin{array}{llll}
  I_1 :={\rm tr}\, {\bf T}, & I_2 := {\rm tr}\, {\bf T}^2, & J_2 := {\rm tr}\, {\bf W}^2, &I_3 := {\rm tr}\, {\bf T}^3,\\
  J_3 := {\rm tr}\, {\bf T} {\bf W}^2, & I_4 := {\rm tr}\, {\bf T}^2 {\bf W}^2, &I_6 := {\rm tr}\, {\bf T}^2 {\bf W}^2 {\bf T} {\bf W}.
\end{array}
\end{equation*}
The following theorem shows the way to obtain a minimal integrity basis of ${\cal K}$ from this particular minimal integrity basis of ${\bf A}({\cal K})$.

\begin{lemma}[Theorem 2.]\label{thm_integrity}
Let $\mathcal{K}$ be a Hall tensor with components $k_{ijk}$, and ${\bf A}(\mathcal{K})$ be its associated second order tensor with components $a_{ij}$. Denote $K_2 := I_1^2$, $J_4 := I_1I_3$, $K_4 := I_1J_3$, $J_6 := I_3^2$, $K_6 := J_3^2$, $L_6 :=I_3J_3$. Then the invariant set
\begin{align}\label{eq_integrity}
 \{I_2, J_2, K_2, I_4, J_4, K_4, I_6, J_6, K_6, L_6\}
\end{align}
is a minimal integrity basis of ${\mathcal K}$.
\end{lemma}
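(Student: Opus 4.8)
The plan is to reduce everything to the associated second order tensor ${\bf A}(\mathcal{K})$ through Theorem 1 and then exploit the parity structure of its minimal integrity basis $\{I_1, I_2, J_2, I_3, J_3, I_4, I_6\}$, in which $I_2, J_2, I_4, I_6$ have even degree while $I_1, I_3, J_3$ have odd degree.

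First, to show that \eqref{eq_integrity} is an integrity basis, I would take an arbitrary isotropic invariant $f(\mathcal{K})$. By Theorem 1(1) the associated $g({\bf A}) = f({\bm \varepsilon}{\bf A})$ is an isotropic invariant of ${\bf A}$, hence a polynomial $P(I_1, I_2, J_2, I_3, J_3, I_4, I_6)$. Since any isotropic invariant of a third order tensor is a sum of even degree polynomials and ${\bf A}$ is linear in $\mathcal{K}$, the polynomial $g$ has only even degree homogeneous parts; splitting $P$ into monomials of even and odd weighted degree, the odd part contributes the (vanishing) odd part of $g$, so $g$ equals its even part and we may assume every monomial of $P$ has even weighted degree. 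In such a monomial $I_1^a I_2^b J_2^c I_3^d J_3^e I_4^f I_6^h$ one has $a + d + e$ even, so the odd generators $I_1, I_3, J_3$ occur an even number of times and can be grouped in pairs; each pair is one of $K_2 = I_1^2$, $J_4 = I_1 I_3$, $K_4 = I_1 J_3$, $J_6 = I_3^2$, $K_6 = J_3^2$, $L_6 = I_3 J_3$, while $I_2, J_2, I_4, I_6$ appear directly. Hence $f$ is a polynomial in the ten invariants of \eqref{eq_integrity}. Conceptually this is precisely the statement that the even subalgebra of $\mathbb{R}[I_1, I_2, J_2, I_3, J_3, I_4, I_6]$, graded by degree parity, is generated by the four even generators together with the six pairwise products of the three odd generators.

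For minimality I would show that each of the ten invariants is polynomially irreducible. The enabling fact is that the seven invariants of ${\bf A}$ have transcendence degree $6$ and their unique syzygy occurs only at degree $12$; consequently, in every degree $\le 6$ they may be treated as algebraically independent, so that distinct monomials in $I_1, I_2, J_2, I_3, J_3, I_4, I_6$ are linearly independent. Suppose one of the ten invariants, homogeneous of degree $d \in \{2, 4, 6\}$, were a polynomial in the other nine; comparing homogeneous components and substituting the definitions yields a degree-$d$ relation in $I_1, \dots, I_6$ with $d \le 6$. Reading it off monomial by monomial settles each case: in degree $2$ the monomials $I_2$, $J_2$, $I_1^2$ are independent; in degree $4$ the monomials $I_4$, $I_1 I_3$, $I_1 J_3$ are independent of one another and of every product of two degree-$2$ generators; in degree $6$ the monomials $I_3^2$, $J_3^2$, $I_3 J_3$ are independent of one another, of $I_6$, and of all products of lower generators, while $I_6$ itself cannot be a polynomial in $I_1, I_2, J_2, I_3, J_3, I_4$ exactly because $\{I_1, \dots, I_6\}$ is a minimal integrity basis of ${\bf A}$. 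Thus no invariant of \eqref{eq_integrity} is reducible.

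The main obstacle is this minimality step, specifically the need to control the relations among the seven invariants of ${\bf A}$ so as to guarantee that no algebraic relation intrudes in degrees $\le 6$; once that is secured, the irreducibility of the ten invariants is just linear independence of explicit families of monomials. If one prefers to avoid quoting the degree of the syzygy, the same linear independence can instead be certified by evaluating the invariants on a few explicitly chosen pairs $({\bf T}, {\bf W})$ and checking that the resulting coefficient matrices (of size three in degree $2$, and the correspondingly larger ones in degrees $4$ and $6$) have full rank. This is the one place where genuine, if routine, computation is required.
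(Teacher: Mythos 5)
Your proof of the integrity-basis half is essentially the paper's own argument: pull an isotropic invariant of $\mathcal{K}$ back to an invariant $g({\bf A})$ via Theorem 1, write it as a polynomial in $\{I_1,I_2,J_2,I_3,J_3,I_4,I_6\}$, use evenness to keep only monomials of even weighted degree, and pair the odd-degree generators $I_1,I_3,J_3$ into the six products $K_2,J_4,K_4,J_6,K_6,L_6$; the paper states this more tersely but identically. Where you genuinely diverge is the minimality half. The paper proves polynomial irreducibility purely computationally: in each degree $d\in\{2,4,6\}$ it lists all degree-$d$ products of basis elements ($3$, $9$, and $23$ of them), evaluates them at that many explicit integer points in $\mathbb{R}^9$, and checks that the resulting coefficient matrix has full rank, so only the trivial linear combination vanishes identically. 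You instead invoke the structure of the invariant ring of ${\bf A}$ --- transcendence degree $6$, with the unique syzygy $I_6^2=q(I_1,I_2,J_2,I_3,J_3,I_4)$ occurring in degree $12$ --- to conclude that no nontrivial relation among the seven generators exists in degree $\le 6$, whence distinct monomials are linearly independent and irreducibility of each of the ten invariants is read off monomial by monomial. This route is sound: the relation ideal is prime of height $1$ in a polynomial ring, hence principal, generated by the irreducible degree-$12$ relation, so indeed nothing intrudes below degree $12$; and it buys a computation-free, conceptual proof that in fact covers all degrees up to $11$ at once. Its cost is that it rests on quoting the classical syzygy fact (Spencer, Smith, Pennisi--Trovato), which the paper never needs; the paper's rank check is elementary and self-contained, at the price of being a degree-by-degree numerical certificate. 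Your closing remark that the same independence can be certified by evaluating at explicitly chosen tensors and checking full rank is precisely the paper's method, so your proposal subsumes it as a fallback. One small point worth a line in a final write-up: when you discard the odd part of $P$, justify it by noting that each homogeneous component of $g$ in the components of $\mathcal{K}$ (equivalently of ${\bf A}$, which is linear in $\mathcal{K}$) is itself an invariant, so $P$ may be chosen with weighted degrees matching those homogeneous degrees; this is the step that lets weighted parity of monomials track parity of $g$.
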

\begin{proof}
By Theorem \ref{thm_invariant}, any isotropic invariant of ${\mathcal K}$ is also an invariant of ${\bf A}(\mathcal{K})$, thus can be expressed by a polynomial $p(I_1, I_2, J_2, I_3, J_3, I_4, I_6)$.
Moreover, any isotropic invariant of an even order tensor consists of several even degree monomials.
Each even degree monomial containing $I_1, I_3, J_3$ should be a polynomial of $I_1^2, I_1I_3, I_1J_3, I_3^2, I_3J_3, J_3^2$.
Therefore, the isotropic invariant $p(I_1, I_2, J_2, I_3, J_3, I_4, I_6)$ can also be written into a polynomial of the invariants in \eqref{eq_integrity}.
That is, \eqref{eq_integrity} is an integrity basis of ${\mathcal K}$.

Next, we need to verify the polynomial irreducibility of this integrity basis. A natural observation is that these isotropic invariants are homogenous polynomials of the 9 independent components in the Hall tensor $\mathcal{K}$. A similar approach as the method proposed by Chen et al.\supercite{ChenQiZou17} is employed in this part.

\begin{enumerate}[(1)]
  \item There are exactly 3 degree-2 isotropic invariants $I_2, J_2, K_2$ in this integrity basis.
   Take $I_2$ for example.
   If it is not polynomial irreducible with the other 9 invariants in this basis, then it has to be a linear combination of the other 2 degree-2 invariants $J_2,K_2$.
   Therefore, if $I_2,J_2,K_2$ are polynomial irreducible, then the unique triple of $(c_1, c_2, c_3)$ such that
   \begin{align}\label{leq2}
   c_1I_2 + c_2J_2 + c_3K_2=0
   \end{align}
   is $c_1 = c_2 = c_3 =0$.
   Note that \eqref{leq2} holds for an arbitrary Hall tensor.
   Thus when we generate $n$ points $y_1, \cdots, y_{n}\in\mathbb{R}^{9}$, where $\mathbb{R}^{9}$ is the real number field with dimension 9, $c_1,c_2,c_3$ must be the solution to the linear system of equations
\begin{align}\label{lsy2}
\left(\begin{array}{ccc}
        I_2(y_1) & J_2(y_1) & K_2(y_1) \\
        I_2(y_2) & J_2(y_2) & K_2(y_2) \\
        \vdots  & \vdots & \vdots \\
        I_2(y_n) & J_2(y_n) & K_2(y_n)
      \end{array}\right)
      \left(\begin{array}{c}
      c_1 \\
      c_2 \\
      c_3
      \end{array}
      \right) =
      \left(\begin{array}{c}
      0 \\
      0 \\
      \vdots \\
      0
      \end{array}
      \right).
\end{align}
The coefficient matrix of System (\ref{lsy2}) is denoted by $M2$, and denote $r(M2)$ as the rank of the coefficient matrix $M2$. Then $r(M2)$ shows the number of polynomial irreducible invariants in these three isotropic invariants.
Take $n=3$ and
\begin{itemize}
  \item $y_1 =(-2, 3, 5, 0, -5, -4, -5, 2, -2)$,
  \item $y_2 =(-3, 0, 1, 1, 2, -4, 3, 0, 3)$,
  \item $y_3 =(-2,0, -1, 2, 1, -3, 5, 2, 3)$.
\end{itemize}
By numerical calculations, we can determine that $r(M2)=3$. Hence, the only solution for System (\ref{lsy2}) is $c_1 = c_2 = c_3 =0$, which implies that these three invariants of degree 2 are polynomial irreducible.

  \item For the invariants of degree $4$, we need to consider the following linear equation
\begin{align}\label{leq4}
&c_1(I_2)^2 + c_2(J_2)^2 + c_3(K_2)^2 + c_4I_2J_2 + c_5I_2K_2 + c_6J_2K_2 + c_7I_4 + c_8J_4 + c_9K_4=0,
\end{align}
where $c_1, \cdots, c_{9}$ are scalars.
If the unique $(c_1,c_2,\dots,c_9)$ such that \eqref{leq4} holds for any Hall tensor is $(0,0,\dots,0)$, then all the 3 degree-4 invariants $I_4,J_4,K_4$ are polynomial irreducible.
We generate $n$ points $y_1, \cdots, y_{n} \in \mathbb{R}^{9}$ and consider the following linear system:
\begin{align}\label{M4}
\left(\begin{array}{ccccc}
        I_2^2(y_1) & \cdots & I_4(y_1) & J_4(y_1) & K_4(y_1) \\
        I_2^2(y_2) & \cdots & I_4(y_2) & J_4(y_2) & K_4(y_2) \\
        \vdots  & \vdots & \vdots & \vdots & \vdots\\
        I_2^2(y_n) & \cdots & I_4(y_n) & J_4(y_n) & K_4(y_n)
      \end{array}\right)
      \left(\begin{array}{c}
      c_1 \\
      c_2 \\
      \vdots \\
      c_{9}
      \end{array}
      \right) =
      \left(\begin{array}{c}
      0 \\
      0 \\
      \vdots \\
      0
      \end{array}
      \right).
\end{align}
The coefficient matrix of System (\ref{M4}) is denoted by $M4$. Take $n=9$ and
 \begin{itemize}
 \item $y_1 =(4, 1, -3, 1, -4, -2, -1, 0, -5)$,
  \item $y_2 =(1, 5, 4, 0, -1, -5, -3, 5, -2)$,
  \item $y_3 =(-4, 4, -4, 1, -5, -2, 2, 3, 4)$,
  \item $y_4 =(-4, -5, 5, 5, -2, 3, 5, -1, 2)$,
  \item $y_5 =(0, 4, 3, 3, 1, -2, 3, 5, -4)$,
  \item $y_6 =(5, -3, 3, 3, -4, -2, 3, 5, -5)$,
  \item $y_7 =(-3, -2, 2, 4, -4, 1, 4, 2, 0)$,
  \item $y_8 = (-5, -3, 4, -1, 1, -2, -2, -3, 0)$,
  \item $y_9 =(0, -2, -2, 1, 5, 3, 4, 0, 0)$.
 \end{itemize}
 We can verify that the rank of $M4$ is $r(M4)= 9$, which implies that these three degree-4 invariants cannot be polynomial represented by other invariants of degree-4 and degree-2.

  \item Similarly, in the case of degree $6$, the verification linear equation is
\begin{align}\label{leq6}
&c_1(I_2)^3 + c_2(J_2)^3 + \cdots + c_{19}K_2K_4 + c_{20}I_6 + \cdots + c_{23}L_6=0.
\end{align}
Thus we generate $n$ points $y_1, \cdots, y_{n} \in \mathbb{R}^{9}$. Consider a linear system similar with system (\ref{M4}). Its coefficient matrix is denoted as $M6$, and its rank is denoted by $r(M6)$.
Take $n=23$ and
\begin{itemize}
  \item $y_1 =(3, -5, 1, 4, 2, 3, 3, 1, -3)$,
  \item $y_2 =(-5, -1, 2, -5, -2, 3, 3, 4, -1)$,
  \item $y_3 =(-4, 2, 1, -3, -2, -2, 1, 4, -1)$,
  \item $y_4 =(-2, 0, 3, 2, -2, -2, -5, 5, 2)$,
  \item $y_5 =(-2, -5, -5, -4, 3, -5, -3, 2, -3)$,
  \item $y_6 =(5, -4, 1, 3, -4, 1, -1, 4, 0)$,
  \item $y_7 =(-3, 3, 5, -3, -3, 1, 2, -2, -3)$,
  \item $y_8 = (2, 2, -5, 4, 4, -1, -5, 4, -5)$,
  \item $y_9 =(-2, -1, 2, 3, -2, -1, -2, -2, 5)$,
  \item $y_{10} =(-4, -3, -4, -2, -5, -5, 5, -2, -3)$,
  \item $y_{11} =(3, 2, -2, -5, 5, -3, 0, -2, -5)$,
  \item $y_{12} =(4, -4, -1, 4, -4, 0, 1, 3, -1)$,
  \item $y_{13} =(3, 0, -5, 0, 2, -5, -5, 4, 1)$,
  \item $y_{14} =(-4, 5, -5, 2, -1, -4, -5, -2, -5)$,
  \item $y_{15} =(2, -5, -5, 5, 0, 2, 2, 3, 4)$,
  \item $y_{16} =(1, 4, 4, -1, -5, -3, 4, -5, 1)$,
  \item $y_{17} =(-2, 5, -5, 1, -2, 1, 0, -5, 4)$,
  \item $y_{18} =(0, -4, -5, 0, -5, -2, -2, -2, 2)$,
  \item $y_{19} =(1, 2, 1, -1, 3, -4, -5, 4, 5)$,
  \item $y_{20} =(3, -3, 1, -3, -5, 3, 5, 1, 1)$,
  \item $y_{21} =(0, -1, 3, 0, -3, 5, 3, 0, 3)$,
  \item $y_{22} =(1, -5, -4, -1, 0, -1, -5, -5, 2)$,
  \item $y_{23} =(-4, -2, 3, 4, 5, -3, 4, 3, 3)$.
\end{itemize}
Then $r(M6)= 23$, which implies that these four invariants with degree $6$ are polynomial irreducible in the integrity basis.
\end{enumerate}
Therefore, we have shown that \eqref{eq_integrity} is a minimal integrity basis of ${\mathcal K}$.
\end{proof}

In the above discussion, we fix the inducing initial, i.e., a particular minimal integrity basis of the second order tensor. Nevertheless, the minimal integrity basis is generally not unique. We can also start from another minimal integrity basis of the second order tensor, denoted by
$$
\{\tilde{I}_1,\tilde{I}_2,\tilde{J}_2,\tilde{I}_3,\tilde{J}_3,\tilde{I}_4,\tilde{I}_6\}.
$$
Construct another integrity basis $\{\tilde{I}_2, \tilde{J}_2, \tilde{K}_2, \tilde{I}_4, \tilde{J}_4, \tilde{K}_4, \tilde{I}_6, \tilde{J}_6, \tilde{K}_6, \tilde{L}_6\}$ of the Hall tensor in the same way, where
$$
\tilde{K}_2:=\tilde{I}_1^2,\ \tilde{J}_4:=\tilde{I}_1\tilde{I}_3,\ \tilde{K}_4:=\tilde{I}_1\tilde{J}_3,\ \tilde{J}_6:=\tilde{I}_3^2,\ \tilde{K}_6:=\tilde{J}_3^2,\ \tilde{L}_6:=\tilde{I}_3\tilde{J}_3.
$$
Since this integrity basis has already got the same number of invariants as the minimal integrity basis \eqref{eq_integrity}, it must also be a minimal integrity basis. Therefore, we have the following corollary.

\begin{lemma}{\bf Corollary 1.}
  Let $\mathcal{K}$ be a Hall tensor with components $k_{ijk}$, and ${\bf A}(\mathcal{K})$ be its associated second order tensor with components $a_{ij}$. Let $\{\tilde{I}_1,\tilde{I}_2,\tilde{J}_2,\tilde{I}_3,\tilde{J}_3,\tilde{I}_4,\tilde{I}_6\}$ be any minimal integrity basis of the second order tensor ${\bf A}(\mathcal{K})$. Denote $\Psi := \{\tilde{I}_2, \tilde{J}_2, \tilde{K}_2, \tilde{I}_4, \tilde{J}_4, \tilde{K}_4, \tilde{I}_6, \tilde{J}_6, \tilde{K}_6, \tilde{L}_6\}$ with $\tilde{K}_2:=\tilde{I}_1^2$, $\tilde{J}_4:=\tilde{I}_1\tilde{I}_3$, $\tilde{K}_4:=\tilde{I}_1\tilde{J}_3$, $\tilde{J}_6:=\tilde{I}_3^2$, $\tilde{K}_6:=\tilde{J}_3^2$, $\tilde{L}_6:=\tilde{I}_3\tilde{J}_3$. Then $\Psi$ is a minimal integrity basis of the Hall tensor $\mathcal{K}$.
\end{lemma}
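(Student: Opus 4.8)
The plan is to follow the proof of Theorem~\ref{thm_integrity} for the spanning (integrity) part essentially verbatim, and to replace the explicit rank computation that established minimality there by an abstract counting argument resting on the fact that the number of invariants of each degree in a minimal integrity basis is fixed\supercite{OliveKolevAuffray17}. Thus only two properties of $\Psi$ have to be verified: that it is an integrity basis, and that it is polynomial irreducible.

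First I would show that $\Psi$ is an integrity basis. By Theorem~\ref{thm_invariant}(1) every isotropic invariant of $\mathcal{K}$ is an isotropic invariant of ${\bf A}(\mathcal{K})$, hence a polynomial $p(\tilde{I}_1,\tilde{I}_2,\tilde{J}_2,\tilde{I}_3,\tilde{J}_3,\tilde{I}_4,\tilde{I}_6)$ in the generators of the chosen minimal integrity basis of ${\bf A}(\mathcal{K})$. Since an isotropic invariant of a third order tensor is a sum of even-degree monomials, each monomial occurring in $p$ has even total degree, so the combined exponent of the three odd-degree generators $\tilde{I}_1,\tilde{I}_3,\tilde{J}_3$ is even; such a monomial is therefore a product of the six degree-2 combinations $\tilde{I}_1^2,\tilde{I}_1\tilde{I}_3,\tilde{I}_1\tilde{J}_3,\tilde{I}_3^2,\tilde{I}_3\tilde{J}_3,\tilde{J}_3^2$, that is of $\tilde{K}_2,\tilde{J}_4,\tilde{K}_4,\tilde{J}_6,\tilde{K}_6,\tilde{L}_6$, times the even-degree generators $\tilde{I}_2,\tilde{J}_2,\tilde{I}_4,\tilde{I}_6$. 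Hence $p$ is a polynomial in the elements of $\Psi$, and $\Psi$ is an integrity basis of $\mathcal{K}$.

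For minimality I would count degrees rather than recompute ranks. Applying the fixed-degree result to the second order tensor ${\bf A}(\mathcal{K})$, any minimal integrity basis of it carries the degree multiset $\{1,2,2,3,3,4,6\}$ of the reference basis $\{I_1,I_2,J_2,I_3,J_3,I_4,I_6\}$; in particular its odd-degree generators are exactly the three elements $\tilde{I}_1,\tilde{I}_3,\tilde{J}_3$ of degrees $1,3,3$, so the construction of $\Psi$ is well defined and produces three degree-2, three degree-4, and four degree-6 invariants, matching the degree profile of \eqref{eq_integrity} with $|\Psi|=10$. Being an integrity basis, $\Psi$ contains a minimal integrity basis $\Psi_0\subseteq\Psi$ obtained by deleting reducible generators. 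Applying the fixed-degree result a second time, now to $\mathcal{K}$, and using the basis \eqref{eq_integrity} from Theorem~\ref{thm_integrity} as a witness, every minimal integrity basis of $\mathcal{K}$ has exactly $10$ invariants, so $|\Psi_0|=10=|\Psi|$; hence $\Psi_0=\Psi$ and no element of $\Psi$ is reducible. Therefore $\Psi$ is a minimal integrity basis of $\mathcal{K}$.

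The main obstacle is to make the counting step airtight rather than the spanning step, which is routine. Three points must be secured: that the generators may be taken homogeneous so that degrees are well defined; that every integrity basis does contain a minimal one as a subset, obtained by successive deletion of reducible generators (a process that terminates); and, crucially, that the cited result fixes the number of generators in each degree, not merely the total number, since this is what prevents $\Psi_0$ from having, for instance, fewer degree-4 and more degree-6 generators than $\Psi$. Once these are in place, the coincidence $|\Psi_0|=|\Psi|$ forces equality and the corollary follows with no further computation.
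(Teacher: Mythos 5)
Your proposal is correct and takes essentially the same route as the paper, whose proof of the corollary likewise reuses the spanning argument of Theorem~\ref{thm_integrity} and then concludes minimality purely by counting: since $\Psi$ has the same number ($10$) of invariants as the minimal integrity basis \eqref{eq_integrity}, the fixed-cardinality result of Olive, Kolev, and Auffray cited in Section~2 forces it to be minimal, and you merely make explicit the extraction of a minimal sub-basis $\Psi_0\subseteq\Psi$ that the paper leaves implicit. One small correction to your closing remark: because $\Psi_0\subseteq\Psi$, the equality $|\Psi_0|=|\Psi|=10$ already forces $\Psi_0=\Psi$ with no degree bookkeeping, so the per-degree refinement of the fixed-count result is genuinely needed only on the ${\bf A}(\mathcal{K})$ side (to pin the degree profile $\{1,2,2,3,3,4,6\}$ of an arbitrary minimal integrity basis, so that $\tilde{I}_1,\tilde{I}_3,\tilde{J}_3$ are the odd-degree generators and the construction of $\Psi$ is well defined), not to rule out degree redistribution inside $\Psi_0$.
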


\section{Irreducible function basis}

Since a minimal integrity basis for a tensor is also a function basis, the number of invariants in an irreducible function basis consisting of polynomial invariants is no more than that of a minimal integrity basis. Moreover, the number of invariants in a minimal integrity basis of a tensor can be very big. For example, the number of minimal integrity basis of an elasticity tensor is 297\supercite{OliveKolevAuffray17}. However, from an experimental point of view, it will be easier to detect all the values of the invariants in an irreducible function basis of a tensor. Hence, it is meaningful to study the irreducible function basis of a tensor. For a symmetric third order tensor, one of its irreducible function base contains 11 invariants, while its minimal integrity basis contains 13 invariants\supercite{CLQZZ18}.

In this section, we shall show that the minimal integrity basis given in Section 3 is also an irreducible function basis of the Hall tensor ${\mathcal K}$.
According to the method proposed by Pennisi and Trovato\supercite{Pennisi87} in 1987,  to show a given function basis of a tensor is functionally irreducible, for each invariant in this basis, we need to find two different sets of independent variables in the tensor, denoted by $V$ and $V^{'}$, such that this invariant takes different values in $V$ and $V^{'}$ while all the remainders are the same in $V$ and $V^{'}$.  The following theorem is proved in this spirit.

\begin{lemma}[Theorem 3]\label{thm_function}
The set  $\{I_2, J_2, K_2, I_4, J_4, K_4, I_6, J_6, K_6, L_6\}$ is an irreducible function basis of the Hall tensor.
\end{lemma}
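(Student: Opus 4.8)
Since Theorem \ref{thm_integrity} already establishes that the set is a minimal integrity basis, and an integrity basis is automatically a function basis, the only thing left to prove is functional irreducibility: no member can be written as a single-valued function of the other nine. Following Pennisi and Trovato\supercite{Pennisi87}, it suffices, for each invariant $\phi$ in the basis, to exhibit two Hall tensors (equivalently, two associated second order tensors $\mathbf{A} = \mathbf{T} + \mathbf{W}$) on which the remaining nine invariants coincide while $\phi$ takes two different values. The plan is to reduce every such requirement to a statement about the seven generators $I_1, I_2, J_2, I_3, J_3, I_4, I_6$ of $\mathbf{A}(\mathcal{K})$. The decisive bookkeeping fact is that the ten listed invariants determine, and are determined by, the quadruple $(I_2, J_2, I_4, I_6)$ together with the triple $(I_1, I_3, J_3)$ up to a global sign: $K_2, J_6, K_6$ are the three squares $I_1^2, I_3^2, J_3^2$ and $J_4, K_4, L_6$ are the three pairwise products $I_1 I_3, I_1 J_3, I_3 J_3$, which pin $(I_1, I_3, J_3)$ down to $\pm (I_1, I_3, J_3)$. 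Thus two Hall tensors agree on all ten invariants exactly when they share $(I_2, J_2, I_4, I_6)$ and their odd triples differ at most by an overall sign.

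The single cleanest case is $I_6 = {\rm tr}\,\mathbf{T}^2\mathbf{W}^2\mathbf{T}\mathbf{W}$. Replacing $\mathbf{A}$ by its transpose $\mathbf{A}^{\top} = \mathbf{T} - \mathbf{W}$ (that is, $\mathbf{W}\mapsto -\mathbf{W}$) fixes $I_1, I_2, I_3$, which are free of $\mathbf{W}$, and $J_2, J_3, I_4$, which are even in $\mathbf{W}$, hence fixes all nine other invariants, while sending $I_6\mapsto -I_6$ because $I_6$ carries an odd number of factors $\mathbf{W}$. It therefore suffices to produce one $\mathbf{A}$ with $I_6\neq 0$, and the pair $(\mathbf{A},\mathbf{A}^{\top})$ does the job.

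For the remaining nine I would proceed stratum by stratum, using the reduction above to decide which entries of the odd triple must be switched off. To isolate a square I first force the others to vanish: for $K_2 = I_1^2$ set $I_3 = J_3 = 0$ (which kills $J_4, K_4, J_6, K_6, L_6$) and vary $I_1$ with $I_2, J_2, I_4, I_6$ held fixed; for $J_6 = I_3^2$ set $I_1 = J_3 = 0$ and vary $I_3$; for $K_6 = J_3^2$ set $I_1 = I_3 = 0$ and vary $J_3$. To isolate a product I keep two odd invariants nonzero with fixed squares and flip their relative sign: for $J_4 = I_1 I_3$ set $J_3 = 0$ and, with $I_1^2$ and $I_3^2$ fixed, reverse the sign of $I_1 I_3$ by holding ${\rm tr}\,\mathbf{T}$ and ${\rm tr}\,\mathbf{T}^2$ fixed while tuning ${\rm det}\,\mathbf{T}$ so that $I_3 = {\rm tr}\,\mathbf{T}^3$ changes sign; symmetrically $K_4 = I_1 J_3$ is treated with $I_3 = 0$ and $L_6 = I_3 J_3$ with $I_1 = 0$. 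Finally $I_2, J_2, I_4$ are varied directly while the remaining generators, and in particular the odd triple, are held fixed. In every case the two second order tensors are turned back into Hall tensors through $\mathcal{K} = \bm\varepsilon\mathbf{A}$.

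The main obstacle throughout is realizability: once the target values of $(I_1, I_2, J_2, I_3, J_3, I_4, I_6)$ are prescribed, one must actually exhibit tensors $\mathbf{T} + \mathbf{W}$ attaining them, and — apart from the transpose trick that settles $I_6$ — no single symmetry changes one invariant while freezing all the rest, because the mixed invariants $J_3 = {\rm tr}\,\mathbf{T}\mathbf{W}^2$, $I_4 = {\rm tr}\,\mathbf{T}^2\mathbf{W}^2$ and $I_6$ couple the relative orientation of $\mathbf{T}$ and $\mathbf{W}$. I expect to fix $\mathbf{W}$ with axis along $\mathbf{e}_3$, rewrite the constraints as explicit polynomial conditions on the entries of $\mathbf{T}$ and on the magnitude and direction of $\mathbf{W}$, and then produce the pairs explicitly — most likely by a short numerical search followed by exact verification, in the spirit of the proof of Theorem \ref{thm_integrity} — while checking that $\mathbf{T}$ retains real eigenvalues and that the targeted invariant genuinely moves. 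The sign-reversal cases $J_4, K_4, L_6$ are where the bookkeeping is tightest, since there the required change is a pure flip of a relative sign with every magnitude held fixed.
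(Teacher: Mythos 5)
Your framework is exactly the paper's: note that a minimal integrity basis is automatically a function basis, then establish functional irreducibility by the Pennisi--Trovato device of exhibiting, for each invariant, two Hall tensors on which the other nine agree while the target differs. Two of your ingredients are actually sharper than what the paper writes down. The bookkeeping observation that $K_2,J_6,K_6$ and $J_4,K_4,L_6$ determine $(I_1,I_3,J_3)$ only up to a single global sign cleanly characterizes what a certificate pair must look like (the paper uses this only implicitly). And your transpose argument for $I_6$ is correct and slicker than the paper's: $\mathbf{A}\mapsto\mathbf{A}^{\top}$ sends $\mathbf{W}\mapsto-\mathbf{W}$, fixes $I_1,I_2,I_3$ (which are $\mathbf{W}$-free) and $J_2,J_3,I_4$ (which are even in $\mathbf{W}$), hence fixes all nine remaining basis elements, while flipping the sign of $I_6$; one then only needs a single tensor with $I_6\neq 0$, which exists (the paper's case (7) values give $I_6=2$). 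The paper instead exhibits an explicit pair for $I_6$ by hand.

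The genuine gap is that for the other nine invariants your proposal stops at a plan. The entire substance of the paper's proof of Theorem 3 is the exhibition of explicit certificate pairs $(V,V')$ --- e.g.\ for $K_2$ a pair built from $\sqrt{(2+\sqrt[3]{4})/2}$, and for $K_4$ a pair involving nested radicals in $\sqrt[3]{3}$ and $\sqrt[3]{9}$ which the paper credits to Yannan Chen, an indication that the search is far from routine. You correctly name realizability as ``the main obstacle,'' but you do not resolve it: you assert, without witnesses or a nonemptiness argument for the relevant real varieties, that $I_3$ can change sign with $I_1\neq 0$ and $\mathrm{tr}\,\mathbf{T}$, $\mathrm{tr}\,\mathbf{T}^2$ frozen (for $J_4$), or that $I_1J_3$ admits a pure sign flip with $I_2,J_2,I_4,I_6,K_2,K_6$ all held fixed (for $K_4$). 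The $J_4$ case could be patched softly: with $\mathbf{W}=0$, the attainable values of $\mathrm{tr}\,\mathbf{T}^3$ at fixed $\mathrm{tr}\,\mathbf{T}\neq 0$ and $\mathrm{tr}\,\mathbf{T}^2$ form a closed interval that contains values of both signs for suitable data, so a sign flip of $I_3$ is available. But for $K_4$, where $\mathbf{W}\neq 0$ is forced and $J_3$, $I_4$, $I_6$ couple the relative orientation of $\mathbf{T}$ and $\mathbf{W}$, no such argument is offered --- and that is precisely the case the paper found hardest. As it stands, your proposal proves one of the ten irreducibility claims and correctly outlines, but does not complete, the remaining nine; completing it requires producing the explicit pairs that constitute the paper's proof.
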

\begin{proof}
It can be verified by definitions  that an integrity basis of a tensor is a function basis of the tensor. We have proved in Section 3 that these ten invariants form a minimal integrity basis of the Hall tensor. Thus this basis is also a function basis.

Denote $V=\{k_{121}, k_{122}, k_{123}, k_{131}, k_{132}, k_{133}, k_{231}, k_{232}, k_{233}\}$ and $V^{'}=\{k_{121}^{'}, k_{122}^{'}, k_{123}^{'}, k_{131}^{'},\\ k_{132}^{'}, k_{133}^{'}, k_{231}^{'}, k_{232}^{'}, k_{233}^{'}\}$ as two different sets of independent variables of the Hall tensor $\mathcal{K}$. Then we shall find ten pairs of $\{V, V^{'}\}$ to show that all the ten isotropic invariants in  \eqref{eq_integrity} is functionally irreducible.

\begin{enumerate}[{\rm (1)}]

\item For $I_2$, in $V$, let $k_{121} = k_{122} = k_{123} = k_{132} = k_{133} = k_{231} = k_{233} = 0,$ $ k_{131}= -1, k_{232}= 1$.

Then in $V^{'}$, let $k_{121}^{'} = k_{122}^{'} = k_{123}^{'} = k_{132}^{'} = k_{133}^{'} = k_{231}^{'} = k_{233}^{'} = 0,$ $ k_{131}^{'}= -2, k_{232}^{'}= 2$.

We have that $I_2 = 2$ and $I_2^{'} = 8$, while other invariants: $\{ J_2, K_2, I_4, J_4, K_4, I_6, J_6, K_6, L_6\}$, and $\{J_2^{'}, K_2^{'}, I_4^{'}, J_4^{'}, K_4^{'}, I_6^{'}, J_6^{'}, K_6^{'}, L_6^{'}\}$ are all equal to $0$. This means that $I_2$ is functionally irreducible in the function basis  \eqref{eq_integrity}.

\item For $J_2$, in $V$, let $k_{121} = k_{122} = k_{123} = k_{132} = k_{133} = k_{231} = k_{233} = 0,$ $ k_{131}= 1, k_{232}= 1$.
Then in $V^{'}$, let all the variables be $0$.

We have that $J_2 = 2$ and $J_2^{'} = 0$, while other invariants: $\{ I_2, K_2, I_4, J_4, K_4, I_6, J_6, K_6, L_6\}$, and $\{I_2^{'},  K_2^{'}, I_4^{'}, J_4^{'}, K_4^{'}, I_6^{'}, J_6^{'}, K_6^{'}, L_6^{'}\}$ are all equal to $0$. This means that $J_2$ is functionally irreducible in the function basis  \eqref{eq_integrity}.

\item For $K_2$, in $V$, let $k_{121} = k_{122} = k_{131}  = k_{133} = k_{232} = k_{233} = 0,$ and $ k_{123}= - \sqrt{\frac{2+\sqrt[3]{4}}{2}}, k_{132}= 0,  k_{231}= \sqrt{\frac{2+\sqrt[3]{4}}{2}}$.

In $V^{'}$, let $k_{121}^{'} = k_{122}^{'} = k_{131}^{'}  = k_{133}^{'} = k_{232}^{'} = k_{233}^{'} = 0,$ and $ k_{123}^{'}= 1, k_{132}^{'}= \sqrt[3]{2},  k_{231}^{'}= 1$.

We have $K_2 = 0$. It is not equal to $K_2^{'} = (2 - \sqrt[3]{2})^2$, while other invariants: $I_2 = I_2^{'}= 2 + \sqrt[3]{4}$,  and $J_2 = J_2^{'}= I_4 = I_4^{'} = J_4 = J_4^{'} = K_4 = K_4^{'} = I_6 = I_6^{'} = J_6 = J_6^{'} = K_6 = K_6^{'} = L_6 = L_6^{'} = 0$. This means that $K_2$ is functionally irreducible.

\item For $I_4$, in $V$, let $k_{121}=-2,$ $k_{122}=0,$ $k_{123}=1,$  $k_{131}=1,$ $k_{132}=1,$ $k_{133}=0,$ $k_{231}=0,$ $k_{232}=1,$ and $k_{233}=2$.

In $V^{'}$, let $k_{121}^{'}=-\sqrt{3},$ $k_{122}^{'}=-\sqrt{2},$ $k_{123}^{'}=1,$  $k_{131}^{'}=0,$ $k_{132}^{'}=1,$ $k_{133}^{'}=-\sqrt{2},$ $k_{231}^{'}=0,$ $k_{232}^{'}=0,$ and $k_{233}^{'}=\sqrt{3}$.

We have $I_4 = 5$. It is not equal to $I_4^{'} = 7,$ while $I_2 = I_2^{'}= 2 $, $J_2 = J_2^{'}= 10$, $K_6 = K_6^{'}= 9$, and others are all equal to 0. This means that $I_4$ is functionally irreducible.

\item For $J_4$, assume that $s=4+\sqrt{14}$, and $t=4-\sqrt{14}$. In $V$, let $k_{121} = k_{122} = k_{131}  = k_{133} = k_{232} = k_{233} = 0,$ and $$ k_{123}= 1, k_{132}= 1,  k_{231}=\frac{\sqrt[3]{2t}}{2} + \sqrt[3]{\frac{s}{4}}.$$

In $V^{'}$, let $k_{121}^{'} = k_{122}^{'} = k_{131}^{'}  = k_{133}^{'} = k_{232}^{'} = k_{233}^{'} = 0,$ and
\begin{align*}
  k_{123}^{'} &= 2- \frac{\sqrt[3]{2t}}{2}- \frac{\sqrt[3]{2s}}{2} -\frac{\sqrt[6]{2}}{2}\sqrt{2\sqrt[3]{4}+8\sqrt[3]{t}+\sqrt[3]{2t^2}+8\sqrt[3]{s}+\sqrt[3]{2s^2}},\\
  k_{132}^{'} &= -1 + \frac{\sqrt[3]{2t}}{4}+ \frac{\sqrt[3]{2s}}{4} -\frac{\sqrt[6]{2}}{4}\sqrt{2\sqrt[3]{4}+8\sqrt[3]{t}+\sqrt[3]{2t^2}+8\sqrt[3]{s}+\sqrt[3]{2s^2}},
\end{align*}
and $k_{231}^{'}= 0.$

We have $J_4 = - J_4^{'} =\frac{3}{8}\left(-4 + \sqrt[3]{2t} +\sqrt[3]{2s}\right)\left(\sqrt[3]{2t} +\sqrt[3]{2s}\right).$
Meanwhile,
\begin{equation*}
\begin{array}{lll}
I_2 &= I_2^{'} &= 2 + \frac{\left(\sqrt[3]{2t}+ \sqrt[3]{2s}\right)^2}{4},\\
K_2 &= K_2^{'} &= \frac{1}{4}\left(-4 + \sqrt[3]{2t} +\sqrt[3]{2s}\right)^2,\\
J_6 &= J_6^{'} &= \frac{9}{16}\left(\sqrt[3]{2t} +\sqrt[3]{2s}\right)^2,
\end{array}
\end{equation*}
 and others are all equal to 0. This shows that $J_4$ is functionally irreducible.

\item For $K_4$, in $V$, let
\begin{equation*}
\begin{array}{lll}
k_{121}=-\frac{1}{2}\sqrt{\frac{-12+6\sqrt[3]{9}}{16-3\sqrt[3]{3}-3\sqrt[3]{9}}}, &k_{122}=\frac{1}{2}, &k_{123}=-1,\\
k_{131}=0, &k_{132}=-\frac{\sqrt[3]{9}}{2}, &k_{133}=\frac{1}{2},\\
k_{231}=-\frac{1}{2}, &k_{232}=0, &k_{233}=\frac{1}{2}\sqrt{\frac{-12+6\sqrt[3]{9}}{16-3\sqrt[3]{3}-3\sqrt[3]{9}}}.
\end{array}
\end{equation*}

In $V^{'}$, let
\begin{equation*}
\begin{array}{lll}
k_{121}^{'}=0, &k_{122}^{'}=-\frac{\sqrt[6]{3}}{2}\sqrt{\frac{9+5\sqrt[3]{3}-6\sqrt[3]{9}}{16- 3\sqrt[3]{3}-3\sqrt[3]{9}}}, &k_{123}^{'}=1,\\
k_{131}^{'}=-\frac{1}{2}\sqrt{\frac{22 - 12\sqrt[3]{3}- 2\sqrt[3]{9}}{16- 3\sqrt[3]{3}-3\sqrt[3]{9}}}, &k_{132}^{'}=\frac{\sqrt[3]{9}}{2}, &k_{133}^{'}=-\frac{\sqrt[6]{3}}{2}\sqrt{\frac{9+5\sqrt[3]{3}-6\sqrt[3]{9}}{16- 3\sqrt[3]{3}-3\sqrt[3]{9}}},\\
k_{231}^{'}=\frac{1}{2}, &k_{232}^{'}=-\frac{1}{2}\sqrt{\frac{22 - 12\sqrt[3]{3}- 2\sqrt[3]{9}}{16- 3\sqrt[3]{3}-3\sqrt[3]{9}}}, &k_{233}^{'}=0.
\end{array}
\end{equation*}

We have
\begin{equation*}
K_4 = - K_4^{'} = \frac{6 + 21\sqrt[3]{3}- 17\sqrt[3]{9}}{-256 + 48\sqrt[3]{3}+48\sqrt[3]{9}}.
\end{equation*}

Meanwhile,\vspace{2mm}
\begin{equation*}
\begin{array}{lll}
I_2 &= I_2^{'} &= \frac{5+3\sqrt[3]{3}}{4},\\ \vspace{2mm}
J_2 &= J_2^{'} &= \frac{4 - 3\sqrt[3]{3}+ 3\sqrt[3]{9}}{-32 + 6\sqrt[3]{3}+ 6\sqrt[3]{9}},\\ \vspace{2mm}
K_2 &= K_2^{'} &= \frac{(-3 + \sqrt[3]{9})^2}{4},\\ \vspace{2mm}
I_4 &= I_4^{'} &= \frac{-23 + 36\sqrt[3]{3} + 9\sqrt[3]{9}}{-256 + 48\sqrt[3]{3}+48\sqrt[3]{9}},\\ \vspace{2mm}
K_6 &= K_6^{'} &= \frac{-47 + 78\sqrt[3]{3} - 31\sqrt[3]{9}}{64(-16 + 3\sqrt[3]{3}+ 3\sqrt[3]{9})^2},
\end{array}
\end{equation*}
 and others are all equal to 0. This shows that $K_4$ is functionally irreducible.

\item For $I_6$, in $V$, let $k_{121}= -1,$ $k_{122}=-1,$ $k_{123}= 1,$  $k_{131}= 1,$ $k_{132}= 1,$ $k_{133}= -1,$ $k_{231}=0,$ $k_{232}=1,$ and $k_{233}=1$.

In $V^{'}$, let $k_{121}^{'}= -1,$ $k_{122}^{'}= -1,$ $k_{123}^{'}= 1,$  $k_{131}^{'}= -1,$ $k_{132}^{'}= 1,$ $k_{133}^{'}= -1,$ $k_{231}^{'}=0,$ $k_{232}^{'}= -1,$ and $k_{233}^{'}= 1$.

We have $I_6 = - I_6^{'} =2$. Meanwhile, $I_2 = I_2^{'}= 2 $, $J_2 = J_2^{'}= 6$, $I_4 = I_4^{'}= 4$, and others are all equal to 0. This means that $I_6$ is functionally irreducible.

\item For $J_6$, in $V$, let $k_{121} = k_{122} = k_{131}  = k_{133} = k_{232} = k_{233} = 0,$ and $ k_{123}= -\sqrt{3}\sqrt[3]{2}, k_{132}= 0,  k_{231}= \sqrt{3}\sqrt[3]{2}$.

In $V^{'}$, let $k_{121}^{'} = k_{122}^{'} = k_{131}^{'}  = k_{133}^{'} = k_{232}^{'} = k_{233}^{'} = 0,$ and $ k_{123}^{'}= \sqrt[3]{2}, k_{132}^{'}= 2\sqrt[3]{2},  k_{231}^{'}= \sqrt[3]{2}$.

We have $J_6 = 0 \neq J_6^{'} = 144$. Meanwhile, $I_2 = I_2^{'}= 6\sqrt[3]{4} $,  and others are all equal to 0. This means that $J_6$ is functionally irreducible.

\item For $K_6$, in $V$, let $k_{121}= \frac{1}{2},$ $k_{122}=1,$ $k_{123}=0,$  $k_{131}=\frac{3}{2},$ $k_{132}=0,$ $k_{133}=1,$ $k_{231}=0,$ $k_{232}=\frac{3}{2},$ and $k_{233}=\frac{1}{2}$.

In $V^{'}$, let $k_{121}^{'}=-\frac{1}{2},$ $k_{122}^{'}=\frac{1}{2},$ $k_{123}^{'}=0,$  $k_{131}^{'}=\sqrt{3},$ $k_{132}^{'}=0,$ $k_{133}^{'}=\frac{1}{2},$ $k_{231}^{'}=0,$ $k_{232}^{'}=\sqrt{3},$ and $k_{233}^{'}=-\frac{1}{2}$.

We have $K_6 = \frac{9}{4} \neq K_6^{'} = \frac{3}{4}.$  Meanwhile, $I_2 = I_2^{'}= \frac{1}{2}$, $J_2 = J_2^{'}= - \frac{13}{2}$, $I_4 = I_4^{'}= -\frac{13}{16}$, and others are all equal to 0. This means that $K_6$ is functionally irreducible.

\item For $L_6$, in $V$, let $k_{121}= -1,$ $k_{122}=\frac{1}{2},$ $k_{123}=-1,$  $k_{131}=0,$ $k_{132}=2,$ $k_{133}=\frac{1}{2},$ $k_{231}=3,$ $k_{232}=0,$ and $k_{233}=1$.

In $V^{'}$, let $k_{121}^{'}=0,$ $k_{122}^{'}=-\frac{1}{2}\sqrt{\frac{5}{2}},$ $k_{123}^{'}=1,$  $k_{131}^{'}=-\frac{1}{2}\sqrt{\frac{5}{2}},$ $k_{132}^{'}=-2,$ $k_{133}^{'}=-\frac{1}{2}\sqrt{\frac{5}{2}},$ $k_{231}^{'}=-3,$ $k_{232}^{'}=-\frac{1}{2}\sqrt{\frac{5}{2}},$ and $k_{233}^{'}= 0$.

We have $L_6 = - L_6^{'} = - \frac{45}{2}.$  Meanwhile, $I_2 = I_2^{'}= 14$, $J_2 = J_2^{'}= - \frac{5}{2}$, $I_4 = I_4^{'}= -\frac{45}{4}$, $J_6 = J_6^{'}= 324$, $K_6 = K_6^{'}= \frac{25}{16}$, and others are all equal to 0. This means that $L_6$ is functionally irreducible.
\end{enumerate}

Therefore, this particular minimal integrity basis  $\{I_2, J_2, K_2, I_4, J_4, K_4, I_6, J_6, K_6, L_6\}$ is also an irreducible function basis of the Hall tensor $\mathcal{K}$.
\end{proof}

In the above proof, the examples $V$ and $V^{'}$ in the cases (1), (2), (4) and (7) are based on related sets in Pennisi and Trovato\supercite{Pennisi87}, while the examples $V$ and $V^{'}$ in the case (5) are suggested by Dr. Yannan Chen.

\section{Conclusions and A Further Question}

In this paper, we investigate isotropic  invariants of the Hall tensor.
For this purpose, we connect the invariants of the Hall tensor ${\mathcal K}$ with the ones of its associated second order tensor ${\bf A}({\cal K})$. ${\bf A}({\cal K})$ can be split into a second order symmetric tensor ${\bf T}$ and a second order skew-symmetric tensor ${\bf W}$.
Then $\{ I_1 := {\rm tr}\, {\bf T}, I_2 := {\rm tr}\, {\bf T}^2, J_2 := {\rm tr}\, {\bf W}^2, I_3 := {\rm tr}\, {\bf T}^3, J_3 := {\rm tr}\, {\bf T} {\bf W}^2, I_4 := {\rm tr}\, {\bf T}^2 {\bf W}^2, I_6 := {\rm tr}\, {\bf T}^2 {\bf W}^2 {\bf T} {\bf W}\}$ is the minimal integrity basis of ${\bf A}({\mathcal K})$ as in the previous sections. It is also an irreducible function basis of  ${\bf A}({\mathcal K})$.
We prove in this paper the following statements:
\begin{enumerate}[(i)]
  \item $\{ I_1^2, I_2, J_2, I_4, I_1I_3, I_1J_3, I_6, I_3^2, J_3^2, I_3J_3\}$ is an isotropic minimal  integrity basis of the Hall tensor ${\cal K}$.
  \item $\{I_1^2, I_2, J_2, I_4, I_1I_3, I_1J_3, I_6, I_3^2, J_3^2, I_3J_3 \}$ is also an isotropic irreducible function basis of the Hall tensor ${\cal K}$ as well.
\end{enumerate}

Apart from this particular selection, we can also begin with any minimal integrity basis of the second order tensor and use the same approach to construct an invariant basis of the Hall tensor. We prove in the paper that such basis of the Hall tensor is a minimal integrity basis.

A further question is whether there exists an irreducible function basis consisting of less than ten polynomial invariants.

\bigskip

\noindent  \textbf{Acknowledgements}\quad\footnotesize   We are thankful to Dr. Yannan Chen for his helpful discussions.   

\end{document}